\title{\LARGE \bf
From Bundles to Backstepping: Geometric Control Barrier Functions\\ for Safety-Critical Control on Manifolds}
\author{Massimiliano de Sa, Pio Ong, and Aaron D. Ames%
\thanks{This research is supported by AFOSR Grant No. 113535-19668, Hybrid Dynamics - Deconstruction and Aggregation (HyDDRA).}%
\thanks{The authors are with the Department of Control and Dynamical Systems and the Department of Mechanical and Civil Engineering, California Institute of Technology, Pasadena CA 91125, U.S.A.
        {\tt\small \{mdesa, pioong, ames\}@caltech.edu}}%
}
\begin{document}

\maketitle
\thispagestyle{empty}
\pagestyle{empty}
\renewcommand{\bf}{\mathbf{f}}

\begin{abstract}

Control barrier functions (CBFs) have a well-established theory in Euclidean spaces, yet still lack general formulations and constructive synthesis tools for systems evolving on manifolds common in robotics and aerospace applications. In this paper, we develop a general theory of geometric CBFs on bundles and, for control-affine systems, recover the standard optimization-based CBF controllers and their smooth analogues. Then, by generalizing kinetic energy-based CBF backstepping to Riemannian manifolds, we provide a constructive CBF synthesis technique for geometric mechanical systems, as well as easily verifiable conditions under which it succeeds. Further, this technique utilizes mechanical structure to avoid computations on higher-order tangent bundles. We demonstrate its application to an underactuated satellite on SO(3).

\end{abstract}

\section{Introduction}
In the modern nonlinear control landscape, safety-critical control is of ever-increasing importance. Control barrier functions (CBFs) \cite{ames2016control, ames2019control} have become a popular framework for designing safety-critical controllers. In this framework, safety is posed as \textit{set invariance} and is enforced through affine constraints well-suited to optimization-based control. 

Control barrier functions are typically studied in Euclidean state spaces. However, a wide range of problem settings have an underlying \textit{non-Euclidean} state space, leading geometry to play a nontrivial role in the control design and analysis. In problem settings ranging from control design for quadrotors \cite{lee2010geometric} and spacecraft \cite{lee2012exponential, bullo1995control, weiss2014spacecraft} to general design and analysis techniques for mechanical systems on manifolds \cite{welde2024almost}, geometry plays an important role in shaping both the practical and theoretical aspects of control.

Despite the rich literature available in Euclidean spaces, CBFs remain largely unexplored in geometric settings. Initial work in this direction focused on synthesizing high-order reciprocal CBFs for a class of fully actuated mechanical systems on manifolds \cite{wu2015safety}, as well as on synthesizing CBFs for a class of configuration constraints for quadrotors \cite{wu2016safety-critical}. However, in \cite{wu2015safety} only the fully-actuated case is considered, and in both \cite{wu2015safety, wu2016safety-critical}, the regularity of the proposed optimization-based CBF controllers goes unverified. Moreover, the ``zeroing" CBFs of \cite{ames2016control} have since become the favored formulation over the reciprocal CBFs considered in \cite{wu2015safety, wu2016safety-critical}, and the high-order CBF synthesis techniques used by \cite{wu2015safety} have been shown to encounter difficulties for simple systems and constraints \cite{cohen2024safety-review}.

In the time since this initial work on geometric CBFs, there have been significant advances in the Euclidean CBF theory that overcome many of the aforementioned challenges. In particular, backstepping \cite{taylor2022safe, cohen2024safety-review} has been shown to yield valid CBFs for high-order systems where the standard high-order CBF techniques fail. It has also demonstrated success in synthesizing CBFs from configuration constraints for Euclidean mechanical systems \cite{cohen2024safety-review, singletary2021safety}. Further, the regularity of CBF-based controllers has been extensively studied, with smooth analogues of safety filters being presented in \cite{cohen2023characterizing}.
\begin{figure}[t]
  \centering
  \includegraphics[scale=0.67]{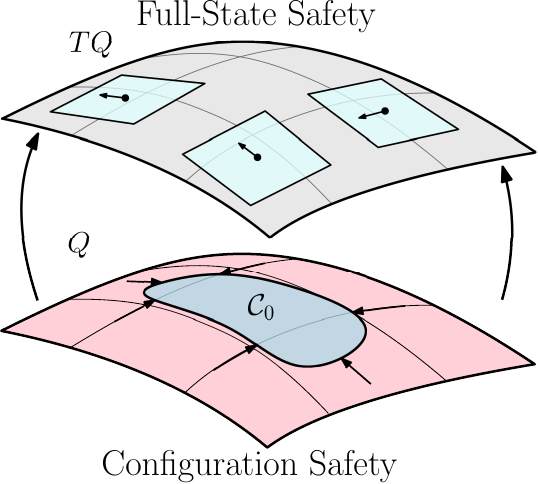}
  \caption{Using a Riemannian formulation of backstepping, we can lift a safe configuration set $\C_0 \subseteq Q$ on the configuration manifold of a geometric mechanical system to a control barrier function on its tangent bundle.}
  \label{fig:cbf-backst}
\end{figure}

In this work, we develop a rigorous framework for geometric CBFs. Unlike prior work, we develop CBFs in the general setting of bundles, enabling safety-critical control design for classes of geometric systems previously out of reach. Further, we present a constructive, global CBF synthesis technique for mechanical systems that avoids the challenges of high-order CBFs.
Our contributions are summarized as follows:
\begin{enumerate}
    \item Formulate CBFs for geometric control systems on bundles and characterize their safety properties.
    \item Develop computationally tractable closed-form solutions for CBF-QP controllers on vector bundles, and establish their regularity properties and smooth variants. 
    \item Devise a backstepping procedure for synthesizing CBFs from configuration constraints for a class of underactuated geometric mechanical systems, while avoiding computations on higher-order tangent bundles.
\end{enumerate}
The remainder of the paper is organized as follows. In Section II, we provide the background on geometry and Euclidean CBFs. In Section III, we develop geometric CBFs for control systems on bundles. In Section IV, we synthesize CBFs for geometric mechanical systems using backstepping. In Section V, the techniques are applied to satellite control.
\section{Preliminaries}
\subsection{Differential Geometry}
To begin, we rapidly overview the background in differential geometry. Here, our aim is primarily to fix notation and provide informal definitions; we will detail the remaining prerequisite material as we proceed. For a thorough coverage and precise definitions, we refer to \cite{lee2013smooth, lee2018introduction}.

Without further qualification, we assume all objects to be smooth ($C^\infty$); we will occasionally substitute \textit{smooth} for \textit{locally Lipschitz},\footnote{For any map between smooth manifolds, \textit{Locally Lipschitz} is well-defined without specification of metric space structures \cite[p. 667]{kvalheim2021necessary}.} but shall be explicit when we do so. For a smooth manifold $\M$, we denote by $T_p \M$ its tangent space at $p \in \M$, $T_p^* \M$ its cotangent space at $p$, and by $T\M$ and $T^*\M$ its tangent and cotangent bundles, respectively. The action of a covector $\omega_p$ on a tangent vector $v_p$ is written $\braket{\omega_p; v_p}$. We write $\pi_{T\M}$ for the projection from $T\M$ to $\M$, taking $v_p \in T_p \M \mapsto p \in \M$. We write the global differential of a smooth map $F: \M_1 \to \M_2$ as $dF: T\M_1 \to T\M_2$, and its restriction to each tangent space as $dF_p$. The set of smooth functions from $\M \to \R$ is written $C^\infty(\M)$.

A \textit{bundle} is a tuple $(\pi, E, \M)$ of smooth manifolds $E$ and $\M$, and a smooth surjective submersion $\pi: E \to \M$. For each $p \in \M$, the set $\pi^{-1}(p) \subseteq E$, denoted $E_p$, is referred to as the \textit{fiber over $p$}. A \textit{section} of a bundle $(\pi, E, \M)$ is a map $\sigma: \M \to E$, for which $\pi \circ \sigma = \id_{\M}$, the identity on $\M$. The space of all sections of $(\pi, E, \M)$ is denoted $\Gamma(E)$.

Informally, a \textit{vector bundle of rank $k$} is a bundle $(\pi, E, \M)$ for which each fiber $E_p$ is a real, $k$-dimensional vector space. One can intuitively reason about a vector bundle as a family of vector spaces $E_p$ which vary smoothly with $p \in \M$. A \textit{subbundle}\footnote{Here, we consider subbundles in the sense of \cite{lee2013smooth}; the entire collection of subspaces of must form an embedded submanifold of $E$.} of a vector bundle is formed by smoothly assigning a subspace of $E_p$ to each $p\in \M$ to get a new vector bundle. A \textit{vector bundle map} between vector bundles $(\pi^1, E^1, \M)$ and $(\pi^2, E^2, \M)$ is a map $A: E^1 \to E^2$ for which $\pi^2 \circ A = \pi^1$ and for each $p \in \M$, the restriction $A_p \defeq A|_{E^1_p}: E^1_p \to E^2_p$ is a linear transformation.

Two key examples of vector bundles are the tangent and cotangent bundles, $T\M$ and $T^* \M$. A subbundle of $T\M$ is called a distribution, and a subbundle of $T^*\M$ a codistribution. Sections of $T\M$ are referred to as \textit{vector fields}, and the set of all vector fields on $\M$ is denoted $\X(\M)$. The value of a vector field $X$ at $p$ is written $X_p$. The maximal interval of existence of an integral curve of $X$ starting at $p$ is written $I(p)$, and its intersection with $\R_{\geq 0}$ as $I_{\geq 0}(p) \defeq I(p) \cap \R_{\geq 0}$. The \textit{flow} of $X$ at time $t$, starting from $p$, is written $\varphi_t(p)$. The Lie derivative of $h \in C^\infty(\M)$ along $X$, $\lie_X h$, is calculated $\lie_X h(p) = \tfrac{d}{dt}|_{t = 0}h(\varphi_t(p)) = dh_p X_p$ for each $p \in \M$.

A Riemannian manifold is a pair $(\M, g)$ of a manifold $\M$ and a \textit{Riemannian metric} $g$, a smooth assignment of an inner product to each $T_p\M$. We write $\braket{v_p, w_p}$ for the metric and $\norm{\cdot}$ for its induced norm on each $T_p \M$. Similarly, we define a metric on a vector bundle $(\pi, E, \M)$ as a smooth assignment of an inner product to each fiber $E_p$.

\subsection{Control Barrier Functions in Euclidean Space}
Consider the nonlinear system in Euclidean space with state $\bx \in \R^n$ and control input $\bu\in\R^m$:
\begin{equation}
    \label{sys:nonlinear}
    \dot \bx = \bPhi(\bx,\bu),
\end{equation}
where $\map{\bPhi}{\R^n\times \R^m}{\R^n}$ is the locally Lipschitz system dynamics. Given a locally Lipschitz controller $\map{\bk}{\R^n}{\R^m}$, the state feedback $\bu=\bk(\bx)$ ensures the existence and uniqueness of a flow $t \mapsto \varphi_t(\bx_0)$, defined for any initial condition $\bx_0 \in \R^n$ on a maximal interval of existence $I(\bx_0)$. Our focus is on designing controllers for \textit{safety} problems, wherein solutions are required to remain within a safety constraint $\Sc\subset\R^n$ for all positive times in $I(\bx_0)$.

One widely used approach to address safety problems is through control barrier functions. Specifically, we consider a continuously differentiable $\map{h}{\R^n}{\R}$ that defines the set:
\begin{align*}
    \Cc = \setdefb{\bx\in\R^n}{h(\bx)\geq 0}.
\end{align*}
To ensure safety, we require that $\Cc$ is a subset of $\Sc$ and that we can find a controller $\bk$ which renders $\Cc$ forward invariant. In this case, $\Cc$ serves as a certified safe operating region; any trajectory initialized in $\Cc$ and controlled by $\bk$ remains in $\Cc$ for all positive time. A sufficient condition for the existence of such controller is
that $h$ is a \textit{control barrier function}.
\begin{defn}\label{defn:euclidean_CBF}
    A continuously differentiable function $\map{h}{\R^n}{\R}$ is a \textit{control barrier function} (CBF) for system~\eqref{sys:ctrl-affine_euclidean} if there exists a function\footnote{A function $\alpha: \R \to \R$ belongs to $\K_\infty^e$ if it is continuous, strictly increasing, and satisfies $\alpha(0) = 0$, $\lim_{r \to \pm \infty}\alpha(r) = \pm \infty$.} $\alpha\in\K_\infty^e$  such that for each $\bx\in\Cc$:\begin{equation}\label{eq:CBF_condition_euclidean}
        \sup_{\bu\in\R^m} \frac{\partial h}{\partial \bx} \bPhi(\bx, \bu) > -\alpha(h(\bx)).
    \end{equation}
\end{defn}

Condition~\eqref{eq:CBF_condition_euclidean} is motivated by the fact that any locally Lipschitz controller $\map{\bk}{\R^n}{\R^m}$ satisfying the condition:
\begin{equation}\label{eq:BC_euclidean}
    \frac{\partial h}{\partial \bx} \bPhi(\bx, \bk(\bx))  \geq -\alpha(h(\bx)),
\end{equation}
for all $\bx$ in a neighborhood $\Dc$ of $\Cc$, will render set $\Cc$ forward invariant for the closed-loop system, and hence safe~\cite{ames2016control}. If $h$ is verified to be a CBF, there exists a $\bu$ satisfying~\eqref{eq:BC_euclidean} at each $\bx \in \C$. However, there is no general method for synthesizing these inputs into a continuous or locally Lipschitz function~$\bk$.

A constructive approach becomes available when the system dynamics $\bPhi$ is affine with respect to the control input:
\begin{equation}\label{sys:ctrl-affine_euclidean}
    \dot \bx = \bPhi(\bx,\bu)=\bf(\bx)+\bG(\bx)\bu,
\end{equation}
where $\map{\bf}{\R^n}{\R^n}$ and $\map{\bG}{\R^n}{\R^{n\times m}}$ are the drift and the actuation matrix, respectively. 
For control-affine systems, a locally Lipschitz controller can be constructed by solving a quadratic program (CBF-QP) that minimally modifies a desired controller $\map{\bk_{\rm{d}}}{\R^n}{\R^m}$ to enforce safety:
\begin{align}
\label{eq:CBF-QP_euclidean}
    \bk_\textup{QP}(\bx)  =\argmin_{\bu\in\R^m} \quad & \norm{\bu-\bk_{\rm{d}}(\bx)}_2^2                                                                  \\
             \textup{s.t.} \quad & \frac{\partial h}{\partial \bx} \bf(\bx) + \frac{\partial h}{\partial \bx} \bG (\bx) \bu \geq -\alpha(h(\bx)). \nonumber
\end{align}
When $h$ is a CBF, the CBF-QP is guaranteed to be feasible and locally Lipschitz in $\bx$ on an open set $\Dc\supset \Cc$. Thus, the CBF-QP ensures that trajectories starting in $\Cc$ remain in $\Cc$.

In Euclidean space, the theory of CBFs is well-established, but has not yet been fully translated into geometric settings. Some progress has been made in \cite{wu2015safety, wu2016safety-critical}, in which reciprocal barrier functions \cite[Defn. 1]{ames2016control} are studied for fully actuated geometric mechanical systems and for quadrotors evolving on $SE(3)$. 
In this work, we develop the theory of ``zeroing" CBFs (the class of CBF introduced above) in a general geometric setting. We start by formulating CBFs for control systems on bundles, and then discuss their synthesis for a class of underactuated geometric mechanical systems. 
\section{Geometric Control Barrier Functions}
Here, we generalize CBFs to the geometric setting. Our development parallels that of the Euclidean case, starting from barrier conditions that ensure forward invariance of safe sets and proceeding to the construction of safe controllers.

\subsection{Control Systems on Bundles}
We begin our exposition by precisely defining the control systems for which we will develop the geometric CBF theory. Following \cite{tabuada2005quotients}, we consider a class of nonlinear control systems defined on bundles.

\begin{defn}\label{defn:nonlinear-sys}
    A \textit{nonlinear control system} is a tuple $\Sigma =(\pi,\U,\M,\bPhi)$ consisting of a bundle $(\pi, \U, \M)$, (where $\U$ is termed the \textit{input space} and $\M$ the \textit{state space}) and the \textit{dynamics} $\bPhi: \U \to T\M$, a map for which the following diagram commutes ($\pi_{T\Mc} \circ \bPhi = \pi$):
    \[\begin{tikzcd}
        \U && T\Mc \\
        & \M
        \arrow["\bPhi", from=1-1, to=1-3]
        \arrow["\pi"', from=1-1, to=2-2]
        \arrow["{\pi_{T\Mc}}", from=1-3, to=2-2]
    \end{tikzcd}\]
    A \textit{controller} for $\Sigma$ is a section $\kappa \in \Gamma(\U)$ of $(\pi, \U, \M)$.
\end{defn}
Analogous to \eqref{sys:nonlinear}, the equations of motion for a nonlinear control system in the sense of Definition \ref{defn:nonlinear-sys} are written:
\begin{align}\label{eq:nonlinear-manifold}
    \dot p &= \bPhi(u_p), 
\end{align}
where $u_p$ belongs to $\U_p$, the fiber of the bundle $(\pi, \U, \M)$ over $p$. Due to the bundle structure, the input and state are packaged together in a single object, $u_p$. Since the input space $\U$ may not have a Cartesian product structure, we cannot separate state from input as we do in Euclidean space.
\begin{example}
    A control system $\dot \bx = \bPhi_{\text{Eucl.}}(\bx, \bu)$ on $\R^n \times \R^m$ is a nonlinear control system in the sense of Definition \ref{defn:nonlinear-sys}. Here, $\M = \R^n$, $\U = \R^n \times \R^m$, $\pi(\bx, \bu) = \bx$, and the $\bPhi$ is the map $(\bx, \bu) \mapsto (\bx, \bPhi_{\text{Eucl.}}(\bx, \bu)) \in T\R^n \approx \R^n \times \R^n$.
\end{example}

\begin{figure}
    \centering
    \includegraphics[width=0.78\linewidth]{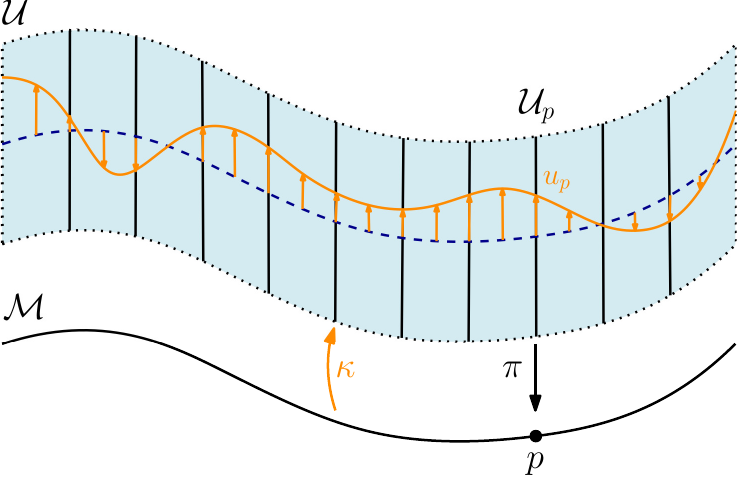}
    \caption{In our global geometric setting, a nonlinear control system is defined on a bundle $(\pi, \U, \M)$. Fibers $\mathcal U_p$ are spaces of control inputs which can be applied at $p$. A controller $\kappa$ is a map taking $p \in \M$ to $\kappa(p) \in \U_p$; above, the image of $\M$ under a controller $\kappa$ is drawn in orange.}
    \label{fig:vect-bundle-sys}
\end{figure}

\subsection{Safety from Control Barrier Functions}
With an appropriate class of nonlinear systems in place, we are now ready to build the theory of CBFs in the geometric setting. Consider a safe set $\C$ defined by\footnote{We consider $C^\infty$ objects for simple exposition; this is relaxable to \textit{differentiable as many times as necessary}, with $C^1$ and $C^2$ usually sufficing.} $h \in C^\infty(\M)$,
\begin{align}\label{eq:safeset_manifold}
    \C = \setdef{p \in \M}{ h(p) \geq 0},
\end{align}
Provided zero is a regular value of $h$, $\C \subseteq \M$ is a properly embedded, codimension-0 submanifold with boundary $\partial \C = h^{-1}(0)$, which itself is a properly embedded submanifold of $\M$ of codimension 1 \cite[Prop. 5.47]{lee2013smooth}. As in $\R^n$, we frame safety with respect to $\C$ as \textit{forward invariance} of $\C$.
\begin{defn}
    Let $X \in \X(\M)$. A set $\C \subseteq \M$ is \textit{forward invariant} for $X$ if the flow $\varphi_t(p_0)$ from any $p_0\in\C$ remains in the set $\C$ for all times $t \in I_{\geq 0}(p_0)$.
\end{defn}
\begin{theo}\label{theo:fwd-inv}
    Let $X$ be a locally Lipschitz vector field on $\M$, and $\C$ be as in \eqref{eq:safeset_manifold}. $\C$ is forward invariant for $X$ if either:
    \begin{enumerate}[(i)]
        \item zero is a regular value of $h$, and $dh_p X_p \geq 0, \; \forall p \in \partial \C$;
        \item there exists a locally Lipschitz $\alpha \in \K_\infty^e$ and an open set $\D \supset \C$ for which $dh_p X_p \geq - \alpha(h(p)), \; \forall p \in \D$.
    \end{enumerate}
\end{theo}
\begin{proof}
    We prove (i) using a corollary of Nagumo's theorem. We recall from \cite[Ex. 4.1.29]{abraham2012manifolds} that, for a vector field $Y \in \X(\R^n)$ and a map $g \in C^\infty(\R^n)$ with zero a regular value, $g^{-1}(\R_{\geq 0})$ is forward-invariant for $Y$ if $\tfrac{\partial g}{\partial \bx}|_{\bx} Y_{\bx} \geq 0, \; \forall \bx \in g^{-1}(0)$. Thus, if $\tfrac{\partial g}{\partial \bx}|_{\bx} Y_{\bx} \geq 0$ for all $\bx \in g^{-1}(0) \cap \V$, for $\V$ an open set, then for any $\bx_0 \in g^{-1}(0) \cap \V$, there is an $\epsilon > 0$ for which $\varphi_t(\bx_0) \in g^{-1}(\R_{\geq 0}) \cap \V$, for all $t \in [0, \epsilon)$.

    We now use this result to establish (i). Since any integral curve of $X$ starting in and leaving $\C$ must cross through $\partial \C$, it is sufficient to prove that for any $p_0 \in \partial \C$, there is an $\epsilon > 0$ for which $\varphi_t(p_0) \in \C, \; \forall t \in [0, \epsilon)$. Fix any $p_0\in \partial \C$; here, $h(p_0)=0$. Since zero is a regular value of $h$, there is a chart $(\mathcal V, \psi = (\bx^i))$ for $\M$ for which $\psi(p_0) = 0$ and $\C \cap \mathcal V = \setdef{(\bx^1, ..., \bx^n) \in \mathcal V}{\bx^n \geq 0}$. Let $\hat h$ and $\hat X$ be the local representatives of $h$ and $X$. The condition $dh_{p} X_{p} \geq 0, \; \forall p \in \partial \C$ implies $\frac{\partial \hat h}{\partial \bx}|_{\bx} \hat X_{\bx} \geq 0, \; \forall \bx \in \psi(\partial \C) \cap \psi(\V)$. We conclude there exists an $\epsilon > 0$ for which $\varphi_t(p_0) \in \C$ for all $t \in [0, \epsilon)$. This implies $\C$ is forward invariant for $X$.

    Now, we prove (ii). Let $\varphi_t^{\D}(p_0)$ be the flow of the restriction $X|_{\D}$ of $X$ to $\D$, and let $I^{\D}(p_0)$ be its maximal interval of existence. Since $h(\varphi_t^{\D}(p_0))$, viewed as a function of $t$, maps from $I^{\D}(p_0) \subseteq \R$ to $\R$, we may apply the comparison lemma \cite[Lemma 2.5]{khalil2002nonlinear}. The condition $dh_p f(\kappa(p)) \geq -\alpha(h(p)), \forall p \in \D$ implies the existence of a $\beta \in \KL$ for which $h(\varphi_t^{\D}(p_0)) \geq \beta(h(p_0), t), \forall t \in I_{\geq 0}^{\D}(p_0)$ and all $p \in \D$. Thus, $p_0 \in \C$ implies $\varphi_t^{\D}(p_0) \in \C, \forall t \in I^{\D}_{\geq 0}(p_0)$.
    
    Since $\varphi_t^{\D}(p_0)$ remains in $\C$ for all $t \in I_{\geq 0}^{\D}(p_0)$, it must be that $I_{\geq 0}^{\D}(p_0) = I_{\geq 0}(p_0)$. Uniqueness of flows \cite[Theorem 9.12]{lee2013smooth} then implies $\varphi_t^{\D}(p_0) = \varphi_t(p_0)$ for all $t \in I_{\geq 0}(p_0)$, from which we conclude forward invariance of $\C$ for $X$.
\end{proof}

Now, let $\Sigma = (\pi, \U, \M, \bPhi)$ be a nonlinear control system. For any controller $\kappa$ for $\Sigma$, the closed-loop system $\bPhi \circ \kappa$ is a vector field on $\M$. Thus, if we can design a controller $\kappa$ for which $\bPhi \circ \kappa$ satisfies either of the conditions of Theorem \ref{theo:fwd-inv}, then $\C$ will be an invariant set for the closed-loop system. This motivates the following, geometric definition of a CBF.
\begin{defn}\label{defn:cbf}
    A function $h \in C^\infty(\M)$ defining a safe set $\C$ as in~\eqref{eq:safeset_manifold} is a \textit{control barrier function} (CBF) for system \eqref{eq:nonlinear-manifold} if there exists a locally Lipschitz $\alpha \in \K_\infty^e$ for which
    \begin{align}\label{eqn:cbf-constr}
        \sup_{u_p \in \U_p} dh_p \bPhi(u_p) > -\alpha(h(p)), \; \forall p \in \C.
    \end{align}
    If additionally, \eqref{eqn:cbf-constr} holds on an open set $\D \supset \C$, $h$ is said to be a CBF for \eqref{eq:nonlinear-manifold} on $\D$.
\end{defn}
\begin{remark}
    The strict inequality \eqref{eqn:cbf-constr} follows the modern CBF literature. This implies zero is a regular value of $h$ and provides additional regularity in controller design.
\end{remark}
This definition ensures that at each point $p \in \D$, there is a $u_p$ that can satisfy condition (i) or (ii) of Theorem \ref{theo:fwd-inv}. However, as in Euclidean setting, without additional structure on the dynamics, it is difficult to tie these individual inputs together into a locally Lipschitz controller. Once again, the structure that enables safe, locally Lipschitz controller synthesis is provided by a control-affine system.

\subsection{CBF-based Controllers for Control-Affine Systems}
By specializing the bundle $(\pi, \U, \M)$ to a vector bundle and the map $\bPhi$ to one that is affine on each fiber $\U_p$, we arrive at the definition of a control-affine system. 
\begin{defn}\label{defn:aff-control-sys}
    A nonlinear control system $\Sigma = (\pi, \U, \M, \bPhi)$ is \textit{control-affine} if $(\pi, \U, \M)$ is a vector bundle of rank $m$ and for each $p \in \M$, the restriction $\bPhi|_{\U_p} : \U_p \to T_p \Mc$ of $\bPhi$ to $\U_p$ is an affine map. If additionally, $\U$ has a vector bundle metric and $\M$ a Riemannian metric,\footnote{When required, we will distinguish between these by $\braket{\cdot, \cdot}_{\U}$ and $\braket{\cdot, \cdot}_{\M}$.} $\Sigma$ is said to be \textit{metric}.
\end{defn}
\begin{example}\label{ex:single-integr}
    The \textit{single integrator} control system on $\M$, $(\pi_{T\M}, T\M, \M, \id_{T\M})$, is a control-affine system for which controllers are vector fields on $\M$.
\end{example}
The affine structure of $\bPhi$ implies the existence of a unique $\bf \in \X(\M)$ and a vector bundle map $\bG: \U \to T\Mc$ for which: 
\begin{align}\label{eq:contr-aff-geometric}
    \dot p = \bPhi(u_p) = \bf_p + \bG_p u_p,
\end{align}
on each fiber $\U_p$ of the bundle. Here, $\bf$ and $\bG$ generalize the familiar drift and control terms of \eqref{sys:ctrl-affine_euclidean}. 

We now develop the theory of safety filters for the class of metric control-affine systems. First, we explore the common technique of using the barrier condition to formulate a controller from a quadratic program (CBF-QP):
\begin{subequations}\label{eq:cbf-qp}
\begin{align}
        \kappa_\textup{QP}(p) = &\argmin_{u_p \in \U_p} \norm{u_p - \kappa_\des(p)}_{\U}^2 \label{eq:cbf-qp-cost}\\
        &\quad\textup{s.t.} \quad   dh_p\bf_p + dh_p \bG_p u_p  \geq -\alpha(h(p)).
        \label{eq:cbf-kappa}
    \end{align}
\end{subequations}
This controller minimally modifies a desired controller $\kappa_\des\in\Gamma(\U)$ in order to enforce safety. However, since it is not possible to \textit{directly} implement this problem in an optimization solver, and a numerical implementation for an abstract manifold would generally require fixing a local frame, we provide a global, analytical solution.
\begin{prop}\label{prop:safety-filter}
    Let $\Sigma$ be a metric control system with a given desired controller $\kappa_\des \in \Gamma(\U)$. If $h$ is a CBF for $\Sigma$ on $\D \supset \C$ with $\alpha \in \K_\infty^e$, then for each $p \in \D$, the unique optimal solution to the CBF-QP~\eqref{eq:cbf-qp}
    is given by: 
    $$
    \kappa_\textup{QP}(p) = \kappa_\des(p) + \lambda_\textup{QP}(a(p), b(p)) \bG_p^* \grad h_p,
    $$
    where $\bG_p^*$ is the adjoint\footnote{The \textit{adjoint} of $\bG_p$ with respect to the metrics on $\U$ and $\M$ is defined $\braket{\bG_p^* v_p, u_p}_{\U} = \braket{v_p, \bG_p u_p}_{\M} \; \forall v_p \in T_p \M, u_p \in \U_p$.} of $\bG_p$, $\grad h$ is the Riemannian gradient of $h$ on $\M$, and $\lambda_{\textup{QP}}$, $a$, and $b$ are defined:
    \begin{align}\label{eq:lambda_qp}
        &\lambda_\textup{QP}(a, b) = \begin{dcases}
            0 & b = 0\\
            \max\{0, -\tfrac{a}{b}\} & b\neq 0,
        \end{dcases} \; (a, b) \in \R^2,\\
        &a(p) =  \alpha(h(p)) + \braket{\grad h_p, \bf_p}_{\M} + \braket{\grad h_p, \bG_p \kappa_\des(p)}_{\M},\nonumber\\
        &b(p) = \norm{\bG_p^* \grad h_p}_{\U}^2. \nonumber
    \end{align} 
\end{prop}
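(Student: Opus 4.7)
The plan is to treat the CBF-QP at each $p$ as a convex quadratic program on the finite-dimensional inner-product space $\U_p$, for which the unique minimizer is obtained by closed-form projection onto the feasible halfspace. The principal technical step is a change of variables that identifies this halfspace in a form amenable to such a projection.

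First, I would rewrite the affine constraint \eqref{eq:cbf-kappa} in terms of the fiber inner product. By definition of the Riemannian gradient, $dh_p(\bG_p u_p) = \braket{\grad h_p, \bG_p u_p}_{\M}$, and by the defining property of the adjoint this equals $\braket{\bG_p^* \grad h_p, u_p}_{\U}$. Setting $w(p) \defeq \bG_p^* \grad h_p$ and substituting $v = u_p - \kappa_\des(p)$ recasts the CBF-QP as
\begin{equation*}
\min_{v \in \U_p}\norm{v}_{\U}^2 \quad \textup{s.t.} \quad \braket{w(p), v}_{\U} \geq -a(p),
\end{equation*}
with $a(p)$, $b(p) = \norm{w(p)}_{\U}^2$ as in the statement. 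This is a single halfspace projection problem on $\U_p$.

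Next, I would split on whether $b(p) = 0$. When $b(p) = 0$, we have $w(p) = 0$, so the constraint collapses to $a(p) \geq 0$; the CBF property \eqref{eqn:cbf-constr}, combined with the observation that $dh_p \bG_p$ vanishes identically (since $\bG_p^* \grad h_p = 0$), forces the strict inequality $\alpha(h(p)) + \braket{\grad h_p, \bf_p}_{\M} > 0$, and the $\bG_p \kappa_\des$-term in $a(p)$ is likewise zero, so $a(p) > 0$ and $v = 0$ is both feasible and the unconstrained minimizer. This matches $\lambda_\textup{QP}(a, 0) = 0$. When $b(p) > 0$, the problem is a textbook halfspace projection: if $a(p) \geq 0$, then $v = 0$ is feasible and optimal; if $a(p) < 0$, the constraint is active at the optimum, and a one-line Lagrangian (or Cauchy-Schwarz) computation gives $v = -\tfrac{a(p)}{b(p)} w(p)$. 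Both subcases are captured by $\lambda_\textup{QP}(a(p), b(p)) = \max\{0, -a(p)/b(p)\}$, yielding the claimed formula for $\kappa_\textup{QP}(p)$. Uniqueness follows from strict convexity of $\norm{\cdot}_{\U}^2$ together with convexity of the halfspace.

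The only delicate step is the degenerate case $b(p) = 0$, where one must invoke the strict inequality in Definition \ref{defn:cbf} to confirm feasibility of the QP; everywhere else the argument is routine halfspace-projection geometry transplanted from $\R^m$ to the abstract fiber $\U_p$ via the Riemannian-gradient/adjoint identification.
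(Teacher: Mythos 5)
Your proof is correct and takes essentially the same route as the paper: convert the constraint into a fiber inner-product constraint via the Riemannian gradient and adjoint, split on whether $b(p)$ vanishes, handle the active-constraint case by Cauchy--Schwarz (equivalently, halfspace projection), and conclude uniqueness from strict convexity. The change of variables $v = u_p - \kappa_\des(p)$ and the "projection onto a halfspace" framing are a mildly cleaner packaging of the paper's Cases I--III, but the underlying argument---including the invocation of the strict CBF inequality to show $a(p) > 0$ when $b(p) = 0$---is the same.
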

\begin{proof}
    We note that for each $p \in \D$, the constraint~\eqref{eq:cbf-kappa} is equivalently written in terms of the Riemannian gradient as:
    \begin{align}
        \braket{\grad h_p, \bf_p}_{\M} + \braket{\grad h_p, \bG_p u_p}_{\M} \geq -\alpha(h(p)). \label{eq:gradient-constraint}
    \end{align}
    With this in mind, we fix $p \in \D$ and split into cases.
    \begin{enumerate}
        \item \underline{Case I}: $b(p)=0$. Here, $\bG_p^* \grad h_p = 0$, and the condition~\eqref{eqn:cbf-constr} from the definition of CBF reduces to $dh_p\bf_p > -\alpha(h(p))$. This implies the constraint~\eqref{eq:cbf-kappa} is satisfied for any $u_p \in \U_p$. Hence, the optimal solution is $\hat u_p = \kappa_\des(p)$, matching the proposed solution for $\kappa_\textup{QP}$.
        \item \underline{Case II}: $b(p)\neq 0$ and $a(p) \geq 0$. The latter implies that $\kappa_\des(p)$ itself satisfies \eqref{eq:cbf-kappa}. Again, the optimal solution is $\hat u_p = \kappa_\des(p)$, which matches the proposed solution since $\lambda_\textup{QP}$ evaluates to zero. 
        \item \underline{Case III}: $b(p)\neq 0$ and $a(p) < 0$. We begin by establishing a lower bound on \eqref{eq:cbf-qp-cost} over the feasible set. From \eqref{eq:gradient-constraint}, $u_p \in \U_p$ is feasible if and only if:
        \begin{align}
            \braket{\grad h_p, \bG_p(u_p - \kappa_\des(p))}_{\M} &\geq -a(p)\\
            \Longleftrightarrow \braket{\bG_p^* \grad h_p, u_p - \kappa_\des(p)}_{\U} &\geq -a(p). \label{eq:adjoint-inequality}
        \end{align}
        Since $-a(p)>0$, we may square both sides of the inequality \eqref{eq:adjoint-inequality} and apply Cauchy-Schwarz to conclude:
        \begin{align}
             &\norm{\bG_p^* \grad h_p}_{\U}^2 \norm{u_p - \kappa_\des(p)}_{\U}^2 \geq a(p)^2\label{eq:CS-Ineq}\\
             \implies &\norm{u_p - \kappa_\des(p)}_{\U}^2 \geq \frac{a(p)^2}{\norm{\bG_p^* \grad h_p}_{\U}^2},
        \end{align}
        which gives us a lower bound on the cost function \eqref{eq:cbf-qp-cost} that is both uniform in $u_p$ and valid over the feasible set at $p$. By substituting our proposed solution $\hat u_p$, 
        \begin{align}
           \hat u_p = \kappa_\des(p) - \frac{a(p)}{b(p)}\bG_p^* \grad h_p,
        \end{align}
        for $u_p$, we verify that $\hat u_p$ is both feasible and achieves the lower bound, so it must be optimal at $p$. 
    \end{enumerate}
    Uniqueness of the CBF-QP solution follows from strict convexity of the cost and convexity of the constraint.
\end{proof}
Proposition \ref{prop:safety-filter} gives a closed-form expression for the optimal safety filter that lets us implement safety-critical geometric controllers without an optimization solver. Further, its proof does not require any heavy machinery from optimization, relying solely on simple lower bound arguments; this suggests the potential for further generalization \cite{ames2025control}.
\begin{cor}\label{cor:loc-lip-safety-filter}
    Provided $\kappa_\des$ and $\alpha$ are locally Lipschitz, $\kappa_{\textup{QP}}$ is locally Lipschitz on $\D$ and renders $\C$ forward invariant for the closed-loop vector field $\bPhi \circ \kappa_{\textup{QP}}$ on $\D$.
\end{cor}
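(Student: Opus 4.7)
The plan is to establish two claims: (a) local Lipschitz continuity of $\kappa_\textup{QP}$ on $\D$, and (b) forward invariance of $\C$ for the closed-loop vector field $\bPhi \circ \kappa_\textup{QP}$. Claim (b) falls out immediately once (a) is in hand: by construction, $\kappa_\textup{QP}(p)$ satisfies the affine constraint~\eqref{eq:cbf-kappa} at every $p \in \D$, so the closed-loop vector field obeys $dh_p (\bPhi \circ \kappa_\textup{QP})_p \geq -\alpha(h(p))$ on the open neighborhood $\D \supset \C$, which is precisely the hypothesis of Theorem~\ref{theo:fwd-inv}(ii).

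For (a), I would start from the closed-form expression of Proposition~\ref{prop:safety-filter},
\[
    \kappa_\textup{QP}(p) = \kappa_\des(p) + \lambda_\textup{QP}(a(p), b(p))\, \bG_p^* \grad h_p.
\]
Here $\kappa_\des$ is locally Lipschitz by hypothesis and $p \mapsto \bG_p^* \grad h_p$ is smooth. Likewise $b$ is smooth, while $a$ is locally Lipschitz as a sum of smooth terms and the locally Lipschitz composition $\alpha \circ h$. Thus the only obstruction to local Lipschitzness of $\kappa_\textup{QP}$ is the regularity of $p \mapsto \lambda_\textup{QP}(a(p), b(p))$, whose defining formula~\eqref{eq:lambda_qp} changes cases at $b = 0$ and has a kink from $\max\{0, \cdot\}$.

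The main obstacle is therefore the behavior of $\lambda_\textup{QP}$ near points where $b$ vanishes. I would handle this by a case split at each $p_0 \in \D$. If $b(p_0) > 0$, continuity of $b$ yields a neighborhood on which $b$ is bounded away from zero; there $\lambda_\textup{QP}(a, b) = \max\{0, -a/b\}$ is the maximum of two locally Lipschitz functions of $p$, hence locally Lipschitz, and multiplication by the smooth $\bG_p^* \grad h_p$ preserves this. If instead $b(p_0) = 0$, then $\bG_{p_0}^* \grad h_{p_0} = 0$, so the third summand in $a(p_0)$ vanishes; exactly as in Case I of the proof of Proposition~\ref{prop:safety-filter}, the CBF condition~\eqref{eqn:cbf-constr} at $p_0$ reduces to $dh_{p_0} \bf_{p_0} > -\alpha(h(p_0))$, yielding $a(p_0) > 0$. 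Continuity of $a$ then gives an open neighborhood $\V$ of $p_0$ on which $a > 0$, and one reads off from~\eqref{eq:lambda_qp} that $\lambda_\textup{QP}(a(p), b(p)) = 0$ for every $p \in \V$: when $b(p) = 0$ this is by definition, and when $b(p) > 0$ we have $-a(p)/b(p) < 0$. Hence the second summand of $\kappa_\textup{QP}$ vanishes identically on $\V$, and local Lipschitzness at $p_0$ is trivial. Combining the two cases gives (a), and then Theorem~\ref{theo:fwd-inv}(ii) closes out (b).
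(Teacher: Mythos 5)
Your proposal is correct and takes essentially the same route as the paper: both establish (a) local Lipschitz continuity of $\kappa_\textup{QP}$ by exploiting the fact that the CBF condition forces $(a(p), b(p))$ into the set $\Pc = \{a > 0 \vee b > 0\}$, and then (b) forward invariance via Theorem~\ref{theo:fwd-inv}(ii). The only difference is that the paper simply asserts $\lambda_\textup{QP}$ is locally Lipschitz on $\Pc$ and invokes composition, whereas you explicitly carry out the pointwise case split on $b(p_0)$ (in particular noting that when $b(p_0)=0$ the CBF condition forces $a(p_0)>0$, so the correction term vanishes on a neighborhood), which is precisely the verification the paper leaves implicit.
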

\begin{proof}
    First, we note that the function $\lambda_\textup{QP}$ \eqref{eq:lambda_qp} is locally Lipschitz on the set $\Pc = \setdef{(a, b) \in \R^2 }{a > 0 \vee b > 0}$. Since $h$ is a CBF on $\D$, $b(p) = 0$ implies $a(p) > 0$, and $a(p) \leq 0$ implies $b(p) > 0$. Hence, $(a(p), b(p)) \in \Pc$ for all $p \in \D$. Since $a$ and $b$ are locally Lipschitz and the vector bundle operations of addition and scalar multiplication are smooth, $\kappa_{\textup{QP}}$ is locally Lipschitz on $\D$. Thus, $\bPhi \circ \kappa_{\textup{QP}}$ is a locally Lipschitz vector field on $\D$. By Theorem \ref{theo:fwd-inv}, we conclude that $\C$ is forward invariant for $\bPhi \circ \kappa_{\textup{QP}}$.
\end{proof}

Although local Lipschitz continuity suffices for safety, smoothness is desirable in many contexts. For example, smoothness will be required later when we apply backstepping on manifolds to extend  our safety results to mechanical systems.
To this end, we study the geometric analogue of smooth safety filters, for which explicit formulas are also available~\cite{sontag1989universal,cohen2023characterizing}. In particular, the \textit{half-Sontag} universal formula is built on the following function:
\begin{equation}\label{eq:lambda_half_sontag}
    \lambda_\textup{HS}(a,b) = \begin{cases} 0 & b= 0, \\
        \frac{-a + \sqrt{a^2 + b^2}}{2b} & b\neq0,
    \end{cases}\quad (a, b) \in \R^2,
\end{equation}
which, despite its appearance, is an analytic function on the set $\Pc=\setdef{(a,b) \in \R^2}{a > 0 \vee b > 0}$ \cite{cohen2023characterizing}. This function is a smooth \textit{squareplus} overapproximation of $\lambda_\textup{QP}$ in \eqref{eq:lambda_qp}.

\begin{cor}\label{cor:smooth-safety-filter}
    Let $\Sigma$ be a control-affine system and $h$ a CBF for $\Sigma$ on an open neighborhood $\D \supset \C$ with a smooth $\alpha \in \K_\infty^e$. For any smooth $\kappa_\des \in \Gamma(\U)$, the map $\kappa_{\textup{HS}}: \D \to \U$,
    \begin{align}\label{eq:smooth-safety-filter}
        p \mapsto \kappa_{\textup{HS}}(p) = \kappa_\des(p) + \lambda_{\textup{HS}}(a(p), b(p))\bG_p^* \grad h_p ,
    \end{align}
    with $\lambda_{\text{HS}}$ the half-Sontag function \eqref{eq:lambda_half_sontag}, is smooth and satisfies $dh_p \bPhi(\kappa_{\textup{HS}}(p)) \geq -\alpha(h(p))$ for all $p \in \D$. Further, there is a globally-defined smooth controller $\tilde \kappa_{\textup{HS}} \in \Gamma(\U)$ for which $\tilde \kappa_\textup{HS}(p)=\kappa_\textup{HS}(p)$ on a neighborhood of $\C$ contained in $\D$.
\end{cor}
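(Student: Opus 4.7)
The plan is to parallel the proof of Corollary \ref{cor:loc-lip-safety-filter}, replacing $\lambda_{\textup{QP}}$ by $\lambda_{\textup{HS}}$ and upgrading every Lipschitz conclusion to a smooth one, then handle the global extension with a smooth bump function. First, I would verify that $(a(p),b(p))$ lies in the open set $\Pc = \{(a,b) \in \R^2 : a > 0 \vee b > 0\}$ for every $p \in \D$. This is exactly the argument used in Corollary \ref{cor:loc-lip-safety-filter}: on $\D$, whenever $b(p)=0$ the CBF condition \eqref{eqn:cbf-constr} forces $a(p) > 0$. Since $\alpha$ is smooth, $\kappa_\des$ is smooth, and the bundle metrics, adjoint $\bG_p^*$, and Riemannian gradient $\grad h$ are all smooth, the maps $a,b\colon \D \to \R$ and $p \mapsto \bG_p^* \grad h_p$ are smooth. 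Combined with analyticity of $\lambda_{\textup{HS}}$ on $\Pc$ (cited from \cite{cohen2023characterizing}) and the smoothness of vector-bundle addition and scalar multiplication, this yields smoothness of $\kappa_{\textup{HS}}$ on $\D$.

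For the safety inequality, I would simply plug $\kappa_{\textup{HS}}(p)$ into $dh_p \bPhi(\cdot) = dh_p \bf_p + \langle \grad h_p, \bG_p(\cdot) \rangle_{\M}$ and use $\langle \grad h_p, \bG_p \bG_p^* \grad h_p \rangle_{\M} = \|\bG_p^* \grad h_p\|_\U^2 = b(p)$. The computation collapses to
\[
dh_p \bPhi(\kappa_{\textup{HS}}(p)) + \alpha(h(p)) = a(p) + \lambda_{\textup{HS}}(a(p),b(p))\, b(p),
\]
so the safety condition reduces to proving $a + \lambda_{\textup{HS}}(a,b)\, b \geq 0$ on $\Pc$. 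For $b=0$ we have $a > 0$, so the inequality is immediate; for $b \neq 0$, direct substitution gives $a + \lambda_{\textup{HS}}(a,b)b = \tfrac{1}{2}(a + \sqrt{a^2+b^2}) \geq 0$. This is the one honest calculation in the proof but it is routine.

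For the global extension, the main obstacle is mildly subtle: $\kappa_{\textup{HS}}$ is only defined on $\D$, and we must extend it smoothly to $\M$ without breaking the match on a neighborhood of $\C$. Since $\C$ is closed in $\M$, $\D$ is open with $\C \subset \D$, and $\M$ is a smooth (paracompact) manifold, the smooth Urysohn lemma \cite[Prop. 2.25]{lee2013smooth} yields a smooth cut-off $\chi\colon \M \to [0,1]$ which equals $1$ on an open neighborhood $\V$ of $\C$ with $\overline{\V} \subset \D$, and whose support is a closed subset of $\D$. I would then define, fiberwise,
\[
\tilde \kappa_{\textup{HS}}(p) = \kappa_\des(p) + \chi(p)\bigl(\kappa_{\textup{HS}}(p) - \kappa_\des(p)\bigr)
\]
for $p \in \D$, and $\tilde \kappa_{\textup{HS}}(p) = \kappa_\des(p)$ for $p \in \M \setminus \operatorname{supp}(\chi)$. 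The correction term $\chi \cdot (\kappa_{\textup{HS}} - \kappa_\des)$ is a smooth section of $\U$ over $\D$ that vanishes outside the compact-in-$\D$ support of $\chi$, so it extends smoothly by the zero section to all of $\M$; adding the smooth global section $\kappa_\des$ yields $\tilde \kappa_{\textup{HS}} \in \Gamma(\U)$. On $\V$, $\chi \equiv 1$ gives $\tilde \kappa_{\textup{HS}} = \kappa_{\textup{HS}}$, establishing the agreement on a neighborhood of $\C$ contained in $\D$.
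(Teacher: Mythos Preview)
Your proposal is correct and follows essentially the same approach as the paper. The paper's proof is terser: it cites \cite[Theorem 2]{cohen2023characterizing} for the safety inequality and \cite[Lemma 10.12]{lee2013smooth} for the global extension, whereas you carry out both computations explicitly (your direct verification of $a + \lambda_{\textup{HS}}(a,b)b \geq 0$ and your bump-function construction are exactly what those citations unpack). One small slip: you refer to the ``compact-in-$\D$ support of $\chi$,'' but the support need not be compact---only closed in $\M$ and contained in $\D$---though this does not affect the argument.
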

\begin{proof}
    As detailed above, $\lambda_{\textup{HS}}$ is analytic on $\Pc$; this implies smoothness of $\kappa_{\textup{HS}}$ on $\D$. By directly following \cite[Theorem 2]{cohen2023characterizing}, one verifies that $\kappa_{\textup{HS}}$ satisfies $dh_p \bPhi(\kappa_{\textup{HS}}(p)) \geq -\alpha(h(p))$ on $\D$. By \cite[Lemma 10.12]{lee2013smooth}, there exists a global, smooth extension $\tilde \kappa_{\textup{HS}} \in \Gamma(\U)$ of $\kappa_{\textup{HS}}$ which is supported in $\D$ and matches $\kappa_{\textup{HS}}$ on a smaller neighborhood of $\C$ in $\D$.
\end{proof}
We conclude that surprisingly little structure is required to retain the standard CBF controller synthesis techniques. Looking at the individual pieces of this result, this is \textit{not} at all surprising---here, we simply replace the familiar Euclidean vector spaces with abstract vector spaces (fibers of a vector bundle)---the bundle then smoothly stitches these vector spaces together to recover the standard CBF results.
\section{CBFs for Geometric Mechanical Systems}
In practical robotics problems, it is desirable to synthesize CBFs directly from configuration constraints; e.g. one might wish to turn obstacle locations into a CBF for a robotic system. In this spirit, we now study the synthesis of CBFs from configuration constraints for geometric mechanical systems.

\subsection{Mechanics on Riemannian Manifolds}
First, we review the background on geometric mechanical systems. We shall make use of the connection formulation of Lagrangian mechanics, which generalizes the familiar Euler-Lagrange equations in $\R^n$ to the Riemannian setting \cite{bullo2019geometric}.
\begin{defn}\label{defn:SMCS}
    A \textit{simple mechanical control system (SMCS)} is a tuple $\Sigma = (Q, g, V, \F)$, of an $n$-dimensional manifold $Q$ (the \textit{configuration manifold}), a Riemannian metric $g$ (the \textit{kinetic energy metric}), $V \in C^\infty(Q)$ (the \textit{potential function}), and a rank $m$ codistribution $\F$ (the \textit{input codistribution}).
\end{defn}
\begin{defn}[Control Force]
    A \textit{control force} for a SMCS $\Sigma$ is a map $F: TQ \to \F$ for which $\pi_{T^*Q} \circ F = \pi_{TQ}$.
\end{defn}

To write the equations of motion of a SMCS, we require the \textit{Levi-Civita connection} $\nabla$, a map uniquely determined by $g$ that lets us compute directional derivatives on manifolds. For $X, Y \in \X(Q)$, $\nabla_X Y$ is the directional derivative of $Y$ in the direction $X$, termed the \textit{covariant derivative} of $Y$ in the direction $X$. Fortunately, many of the familiar directional derivative properties translate to the covariant derivative \cite{lee2018introduction}.

For a curve $q: I \to Q$ in the configuration manifold, $\nabla_{\dot q} \dot q$ is the \textit{geometric acceleration} of $q$, valued in $TQ$. In terms of $\nabla$, the equations of motion of a SMCS are written as\footnote{$\sharp$ is the cotangent-tangent isomorphism $\braket{\omega_q^\sharp, v_q} = \braket{\omega_q; v_q} \forall v_q \in TQ$.}:
\begin{align}\label{eqn:eom}
    \nabla_{\dot q}\dot q = - \grad V_q + (F(\dot q))^\sharp.
\end{align}
Importantly, \eqref{eqn:eom} is a \textit{global} description of the dynamics, non-reliant on local coordinate charts. For a specified (locally Lipschitz) control force $F$, \eqref{eqn:eom} lifts to a (locally Lipschitz) vector field in $\X(TQ)$. As such, $\forall v_q \in TQ$, there is a unique maximal integral curve $\dot q : I(v_q) \to TQ$ with $\dot q(0) = v_q$.

The description \eqref{eqn:eom} of the equations of motion determines a control-affine system with state manifold $TQ$ \cite{bullo2019geometric}.\footnote{One may take the input space $\U_{v_q}$ to be the \textit{vertical lift} of $\F_q^\sharp$; as the lift is an isomorphism, we may identify vectors in $\U_{v_q}$ and $\F_q$ \cite[p. 68]{oliva2004geometric}. This let us unambiguously write $\bPhi$ as a function of $(v_q, F_q)$ or $F(v_q)$.} Instead of working with the induced control-affine system, however, we will find it more analytically convenient to work within the Levi-Civita connection formalism; this will avoid any complex calculations on $T(TQ)$. Although this will occasionally result in a slight abuse of definitions, the results are equivalent to those for the control-affine system on $TQ$.

To perform CBF synthesis for a SMCS, we must understand its \textit{actuation geometry}. This is encapsulated by two distributions, based on those introduced in \cite{welde2022role}.
\begin{defn}
    Let $\Sigma$ be a SMCS. The \textit{unactuated distribution} $\Uc Q$ of $\Sigma$ is the coannihilator of the input codistribution,
    \begin{align}
        \Uc_qQ = \{v_q \in T_q Q: \braket{F_q; v_q} = 0,\;  \forall F_q \in \F_q\},
    \end{align}
    while the \textit{actuated distribution} $\Ac Q$ is its $g$-orthogonal complement, $\Ac_qQ = \Uc_qQ^\perp.$
\end{defn}

By construction, at each configuration $q \in Q$ the tangent space $T_q Q$ splits as the direct sum $T_q Q = \Ac_q Q \oplus \Uc_q Q$. Thus, for every $v_q \in T_qQ$, there are unique $v_q^\Ac \in \Ac_qQ$ and $v_q^\Uc \in \Uc_qQ$ for which $v_q = v_q^\Ac + v_q^\Uc$. This splitting induces vector bundle maps $\pi_\Ac, \pi_\Uc: TQ \to TQ$, which project $v_q \mapsto v_q^\Ac$ and $v_q \mapsto v_q^\Uc$, respectively. Later, we will find it useful to understand the interaction between such maps and covariant differentiation, which we summarize as follows.
\begin{lem}\label{lem:tens-field}
    Let $(Q, g)$ be a Riemannian manifold and $X, Y \in \X(Q)$. For any vector bundle map $A: TQ \to TQ$,
    \begin{align}
        \nabla_X(AY) &= A \nabla_X Y + (\nabla_X A)(Y),
    \end{align}
    for $\nabla_X A$ the covariant derivative of $A$ as a tensor field.\footnote{A $(1, 1)$-tensor field on $Q$ is a smooth assignment of $q \in Q$ to a bilinear map from $T_q^*Q \times T_qQ  \to \R$, $q \mapsto \tilde A_q : T_q^* Q \times T_qQ \to \R$. Just like we can identify a matrix $B \in \R^{n\times n}$ with a bilinear form $\tilde B(x, y) = x^\top B y$, we can identify $A$ with a $(1, 1)$-tensor field $\tilde A_p(\omega_p, v_p) = \braket{\omega_p; Av_p}$. As the covariant derivative of tensor fields is well-defined, $\nabla_X A$ is well-defined.}
\end{lem}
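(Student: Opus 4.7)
The plan is to invoke the standard Leibniz rule for the covariant derivative of a $(1,1)$-tensor field, using the identification between the vector bundle map $A$ and its associated tensor $\tilde A$ given in the footnote. Since this identity is really an intrinsic product-rule statement, I want to avoid any coordinate computation and simply pair against an arbitrary covector to reduce things to scalar identities where the Leibniz rule of $\nabla$ is already known.

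First I would recall that for any $(1,1)$-tensor field $T$ on $Q$, the covariant derivative along $X \in \X(Q)$ is defined by
$$(\nabla_X T)(\omega, Y) = X\bigl(T(\omega, Y)\bigr) - T(\nabla_X \omega, Y) - T(\omega, \nabla_X Y),$$
for every one-form $\omega$ and vector field $Y$. Applying this to $\tilde A(\omega, Y) = \braket{\omega; AY}$ yields
$$(\nabla_X \tilde A)(\omega, Y) = X\braket{\omega; AY} - \braket{\nabla_X \omega; AY} - \braket{\omega; A \nabla_X Y}.$$
Next, I would invoke the standard Leibniz rule for the pairing of covectors with vectors (which follows from metric compatibility and the definition of $\nabla$ on $T^*Q$), $X\braket{\omega; Z} = \braket{\nabla_X \omega; Z} + \braket{\omega; \nabla_X Z}$, applied with $Z = AY$. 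The first and second terms on the right-hand side then cancel, leaving
$$(\nabla_X \tilde A)(\omega, Y) = \braket{\omega; \nabla_X(AY) - A \nabla_X Y}.$$

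Since this equality holds for every one-form $\omega$, the two vectors being paired on the right must agree. Under the identification of the $(1,1)$-tensor $\nabla_X \tilde A$ with the vector bundle map $\nabla_X A$, the left-hand side is by definition $\braket{\omega; (\nabla_X A)(Y)}$, so $(\nabla_X A)(Y) = \nabla_X(AY) - A \nabla_X Y$, which rearranges to the claim. No real obstacle is anticipated: the content is a routine application of the tensor-derivative Leibniz rule, and the only thing requiring care is the bookkeeping between the vector bundle map $A$ and its associated $(1,1)$-tensor $\tilde A$, together with the standard fact that $\nabla$ on $T^*Q$ is defined precisely so that the pairing satisfies the product rule.
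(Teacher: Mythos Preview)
Your argument is correct and is essentially the same as the paper's: the paper obtains the identity $(\nabla_X A)(\omega, Y) = \braket{\omega; \nabla_X(AY)} - \braket{\omega; A \nabla_X Y}$ by citing the tensor Leibniz rules in Lee (Proposition~4.15(a),(b)), whereas you unpack those rules explicitly via the product rule for the pairing $X\braket{\omega; Z} = \braket{\nabla_X \omega; Z} + \braket{\omega; \nabla_X Z}$. One small quibble: that pairing identity is really the \emph{definition} of $\nabla$ on one-forms (equivalently, compatibility with contraction), not a consequence of metric compatibility, so you can drop the reference to metric compatibility.
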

\begin{proof}
    We calculate the covariant derivative of $A$ as a $(1, 1)$ tensor field. Applying \cite[Proposition 4.15 (a), (b)]{lee2018introduction} for any $X, Y \in \X(Q)$ and $\omega \in \Gamma(T^* Q)$ yields
    \begin{align}
        (\nabla_X A)(\omega, Y) = \braket{\omega; \nabla_X(AY)} - \braket{\omega; A \nabla_X Y}. 
    \end{align}
    As this holds for all $\omega \in \Gamma(T^*Q)$, we pass back through the tensor field-VB map identification to conclude $(\nabla_X A)(Y) = \nabla_X(AY) - A \nabla_X Y$. Rearranging, the result follows.
\end{proof}

\subsection{Backstepping Construction}
Our goal is to synthesize a CBF for a simple mechanical control system from a safety constraint on its configuration manifold~$Q$. Since these systems are second-order, we must employ high-order CBF techniques to translate such a constraint into a CBF. Here, we propose one such technique based on \textit{backstepping}, in which we lift a configuration safety specification on $Q$ to a CBF on $TQ$ by penalizing error to a safe configuration-level vector field on $Q$. This approach avoids the challenges associated with the ``standard" HOCBF, which may not produce valid CBFs even for simple systems and constraints~\cite{cohen2024safety-review}. Further, our backstepping technique will be integrated with the SMCS framework in a manner that avoids computations on the second-order tangent bundle $T(TQ)$. This CBF synthesis process is outlined as follows:
\begin{enumerate}
    \item Identify a configuration safety specification $h_0$ on $Q$.
    \item Find a smooth, safe velocity vector field $\kappa$ for $h_0$ on $Q$.
    \item Use $h_0, \kappa$ in backstepping to produce a CBF for $\Sigma$.
\end{enumerate}
We begin filling in the technical details of this process by precisely defining a configuration safety specification.
\begin{defn}\label{defn:CSS}
    A \textit{configuration safety specification} on a manifold $Q$ is a function $h_0 \in C^\infty(Q)$ for which zero is a regular value. We denote its zero superlevel set $h_0^{-1}(\R_{\geq 0})$ by $\C_0$. 
\end{defn}

Our first task is to generate a smooth, safe velocity vector field on $Q$ for the configuration safety specification $h_0$. To accomplish this, we will construct a smooth vector field $\kappa \in \X(Q)$ for which $d(h_0)_q \kappa_q > -\alpha(h_0(q))$, for all $q$ in an open set $\D_0 \supset \C_0$ and some $\alpha \in \mathcal{K}_\infty^e$. Recalling from Example \ref{ex:single-integr} the \textit{single integrator} control system on a manifold, we will show that designing $\kappa$ is reducible to designing a smooth safety filter for the single integrator on $Q$.

\begin{lem}\label{lem:config-safety}
    Let $h_0$ be as in Definition \ref{defn:CSS} and $\kappa_{\textup{HS}}^Q \in \X(Q)$ be any smooth safety filter for the single integrator on $Q$, as in Corollary \ref{cor:smooth-safety-filter}. There is an open set $\D_0 \supset \C_0$ on which:
    \begin{align}
        \kappa_q = \kappa_{\textup{HS}}^Q(q) + \delta \grad h_0|_q
    \end{align}
    satisfies $d(h_0)_q \kappa_q > -\alpha(h(q)) \; \forall q \in \D_0$, for any $\delta > 0$.
\end{lem}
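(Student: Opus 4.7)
The plan is to verify that $h_0$ is itself a CBF for the single integrator on some open neighborhood $\D_0 \supset \C_0$, invoke Corollary \ref{cor:smooth-safety-filter} to obtain $\kappa_{\textup{HS}}^Q$, and then argue that the $\delta \grad h_0$ correction upgrades the weak barrier inequality of $\kappa_{\textup{HS}}^Q$ to the strict one required.

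First, I would verify that $h_0$ qualifies as a CBF for the single integrator of Example \ref{ex:single-integr}, equipped with any smooth $\alpha \in \K_\infty^e$, on an open neighborhood of $\C_0$. For the single integrator $\bf \equiv 0$ and $\bG \equiv \id_{TQ}$, so condition \eqref{eqn:cbf-constr} reduces to $\sup_{u_q \in T_qQ} d(h_0)_q u_q > -\alpha(h_0(q))$. Wherever $\grad h_0|_q \neq 0$ the supremum is $+\infty$; wherever $\grad h_0|_q = 0$ it becomes $0 > -\alpha(h_0(q))$, which is valid iff $h_0(q) > 0$. Since zero is a regular value of $h_0$, continuity yields an open neighborhood $\mathcal N$ of $\partial \C_0$ on which $\grad h_0|_q \neq 0$, and taking $\D_0 = \mathcal N \cup \mathrm{int}(\C_0)$ produces an open set containing $\C_0$ on which $h_0$ is a CBF for the single integrator. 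Corollary \ref{cor:smooth-safety-filter} then supplies a smooth $\kappa_{\textup{HS}}^Q \in \X(Q)$ with $d(h_0)_q \kappa_{\textup{HS}}^Q(q) \geq -\alpha(h_0(q))$ on $\D_0$.

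Next, using the duality $d(h_0)_q v_q = \braket{\grad h_0|_q, v_q}$, I would compute
\begin{align*}
    d(h_0)_q \kappa_q &= d(h_0)_q \kappa_{\textup{HS}}^Q(q) + \delta \norm{\grad h_0|_q}^2 \\
                    &\geq -\alpha(h_0(q)) + \delta \norm{\grad h_0|_q}^2
\end{align*}
on $\D_0$, and split into cases. If $\grad h_0|_q \neq 0$, the added $\delta \norm{\grad h_0|_q}^2 > 0$ produces strict inequality directly. If $\grad h_0|_q = 0$, then $d(h_0)_q \kappa_q = 0$; but by construction of $\D_0$ together with regularity on $\partial \C_0$, any such $q$ must lie in $\mathrm{int}(\C_0)$, giving $h_0(q) > 0$ and hence $-\alpha(h_0(q)) < 0 = d(h_0)_q \kappa_q$, again strict.

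The main subtlety is the degenerate case $\grad h_0|_q = 0$, where the correction term vanishes and no extra slack is gained. The resolution is that regularity of $h_0$ at zero forces any such $q$ into the open interior of $\C_0$, where $\alpha(h_0(q)) > 0$ already provides the needed slack for free; this is precisely why $\D_0$ must be built to avoid points outside $\C_0$ where $\grad h_0$ vanishes.
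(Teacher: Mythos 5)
Your proposal is correct and follows essentially the same route as the paper's proof: show $h_0$ is a CBF for the single integrator on a neighborhood of $\C_0$ chosen to exclude critical points outside $\C_0$, invoke Corollary \ref{cor:smooth-safety-filter} for $\kappa_{\textup{HS}}^Q$, and note that adding $\delta\grad h_0$ upgrades the non-strict inequality to a strict one. Your case split on $\grad h_0|_q = 0$ versus $\neq 0$ makes explicit what the paper leaves terse, namely why the correction still yields strictness where the gradient vanishes; the only small imprecision is that after invoking the corollary, $\kappa_{\textup{HS}}^Q$ satisfies the barrier inequality only on a possibly smaller neighborhood of $\C_0$ inside your constructed $\D_0$, so the final $\D_0$ in the lemma should be taken to be that smaller set, on which your argument still goes through.
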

\begin{proof}
   To show $\kappa_{\textup{HS}}^Q$ is well-defined, we will show $h_0$ is a CBF for the single integrator on $Q$. Fix any smooth $\alpha \in \K_\infty^e$, and let $\Vc \supset \C_0$ be an open set for which $\Vc \setminus \C_0$ contains no critical points ($\V_c$ is guaranteed to exist since $0$ is a regular value of $h_0$). For any $q \in \Vc$, either $h_0(q)>0$ or $d(h_0)_q$ is nonzero, implying $\sup_{v_q \in T_q Q} d(h_0)_q v_q > -\alpha(h_0(q))$. Thus, $h_0$ is a CBF on $\V$ for the single integrator. Proposition \ref{cor:smooth-safety-filter} yields a smooth $\kappa_{\textup{HS}}^Q \in \X(Q)$ satisfying a non-strict barrier inequality on an open set $\D_0 \supset \C_0$; for any $\delta > 0$, adding $\delta \grad h_0$ to $\kappa_{\textup{HS}}^Q$ makes the inequality strict on~$\D_0$.
\end{proof}

Thus, given \textit{any} configuration safety specification, we can find a smooth, globally-defined safe vector field on $Q$ that satisfies a strict barrier condition on a neighborhood of $\C_0$. In the case of a compact safe configuration set $\C_0$, the Riemannian gradient $\grad h_0$ also provides a safe velocity vector field on a neighborhood of $\C_0$. This situation occurs, for instance, if $h_0$ is constructed from a \textit{navigation function} in the sense of \cite{koditschek1990robot} (e.g., $h_0 \defeq 1 - \textup{Nav. Func.}$).

Now, we discuss the construction of the CBF for a mechanical system. Inspired by the CBF backstepping method of \cite{cohen2024safety-review}, we consider the CBF candidate:
\begin{align}\label{eqn:cand-cbf}
    h(v_q) = h_0(q) - \frac{\varepsilon}{2}\norm{(v_q - \kappa_q)^\Ac}^2, \; v_q \in TQ,
\end{align}
where $\varepsilon > 0$ is a fixed design parameter and $\norm{\cdot}$ is induced by the kinetic energy metric. We interpret the norm term in \eqref{eqn:cand-cbf} as penalizing the projection of error to the safe velocity $\kappa_q$ onto the space of directions in which we can actuate. The scalar $\varepsilon$ tunes how close $h$ is to the constraint $h_0$ on $Q$; smaller $\varepsilon$ yields a function that better approximates the safe configuration set, potentially at the cost of higher actuation.

Now, we begin the process of proving \eqref{eqn:cand-cbf} is a CBF. Here, we will find it much easier to work with \eqref{eqn:eom} instead of the induced control-affine dynamics $\bPhi$ on $TQ$. To emphasize this choice, we will write $\dot h(v_q, F_q)$ in place of $dh_{v_q} \bPhi(v_q, F_q)$; e.g. \eqref{eqn:cbf-constr} can be written $\sup_{F_q \in \F_q} \dot h(v_q, F_q) > -\alpha(h(v_q))$. As $\dot h(v_q, F_q)$ and $dh_{v_q} \bPhi(v_q, F_q)$ are equivalent, our CBF theory immediately translates; we make the notational distinction simply to emphasize our use of the Levi-Civita framework.
\begin{lem}\label{lem:cbf-lie-deriv}
    Let $\Sigma$ be a SMCS and $\kappa \in \X(Q)$. Consider the CBF candidate $h$ (\ref{eqn:cand-cbf}). For any $v_q \in T_qQ$ and $F_q \in \F_q$,
    \begin{align}\label{eqn:hDot}
         &\dot h(v_q, F_q) = d(h_0)_q v_q
        +  \varepsilon \langle e_q^\Ac,  \\
        &\nabla_{v_q}\kappa_{q} + \grad V_{q} - (\nabla_{v_q} \pi_A)(e_q) \rangle
        - \varepsilon \braket{F_q; e_q^\Ac}, \nonumber
    \end{align}
    where $e_q = v_q - \kappa_q$ and $\nabla_{v_q} \pi_A$ is the covariant derivative of $\pi_A$ as a $(1, 1)$-tensor field.
\end{lem}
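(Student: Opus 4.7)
The strategy is to differentiate $h\bigl(\dot q(t)\bigr)$ along an arbitrary solution curve of the SMCS with initial velocity $\dot q(0)=v_q$ and instantaneous control force $F(v_q)=F_q$, staying inside the Levi--Civita formalism so as to avoid any computation on $T(TQ)$. I would split the derivative by linearity: the configuration piece contributes $\tfrac{d}{dt}\big|_{t=0} h_0(q(t)) = d(h_0)_q v_q$ directly, while for the kinetic piece I would introduce the ``error along the curve'' $e(t) = \dot q(t) - \kappa_{q(t)} \in T_{q(t)}Q$, so that $\norm{e(t)^\Ac}^2 = \langle \pi_\Ac e(t), \pi_\Ac e(t)\rangle$. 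Metric compatibility of $\nabla$ then lets me differentiate this inner product by pushing two copies of the covariant derivative along the curve inside.

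Next, I would expand $\tfrac{D}{dt}(\pi_\Ac e)$ using Lemma \ref{lem:tens-field} applied along $q$, obtaining $(\nabla_{\dot q}\pi_\Ac)(e) + \pi_\Ac \tfrac{De}{dt}$, and then substitute the equations of motion \eqref{eqn:eom} into $\tfrac{De}{dt} = \nabla_{\dot q}\dot q - \nabla_{\dot q}\kappa = -\grad V_q + F(\dot q)^\sharp - \nabla_{\dot q}\kappa$. Evaluating at $t=0$ replaces $\dot q$ by $v_q$ everywhere and $F(\dot q)$ by $F_q$, producing a raw expression built out of $e_q^\Ac$, $(\nabla_{v_q}\pi_\Ac)(e_q)$, $\pi_\Ac\grad V_q$, $\pi_\Ac F(v_q)^\sharp$, and $\pi_\Ac \nabla_{v_q}\kappa_q$.

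The step that actually delivers the clean form in the statement is to invoke the structural fact that $\pi_\Ac$ is the $g$-orthogonal projector onto $\Ac_q Q$, hence self-adjoint and idempotent. Consequently $\langle e_q^\Ac, \pi_\Ac X\rangle = \langle \pi_\Ac e_q^\Ac, X\rangle = \langle e_q^\Ac, X\rangle$ for every $X \in T_qQ$, which absorbs the outer $\pi_\Ac$ on the three terms originating from $\tfrac{De}{dt}$; the potential, input, and $\nabla_{v_q}\kappa_q$ summands all lose their outer projection, while the tensorial term $(\nabla_{v_q}\pi_\Ac)(e_q)$ survives unaltered. Converting $\langle e_q^\Ac, F(v_q)^\sharp\rangle$ back into the dual pairing $\braket{F_q; e_q^\Ac}$ via the musical isomorphism, and carrying the overall $-\tfrac{\varepsilon}{2}$ through to flip signs, yields exactly \eqref{eqn:hDot}.

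I expect the main obstacle to be purely bookkeeping: keeping straight which terms are ``flattened'' by the self-adjointness of $\pi_\Ac$ and which remain guarded by $(\nabla_{v_q}\pi_\Ac)$, since Lemma \ref{lem:tens-field} is precisely what prevents $\pi_\Ac$ from commuting with the covariant derivative. Once that distinction is tracked carefully, the sign calculus induced by the $-\tfrac{\varepsilon}{2}$ prefactor and the musical isomorphism finish the proof; no tools beyond the product rule of Lemma \ref{lem:tens-field}, metric compatibility, and the definition of $\sharp$ are needed.
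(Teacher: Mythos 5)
Your proposal is correct and follows the same route as the paper's proof: differentiate $h$ along a solution curve, split off the configuration term, apply metric compatibility, expand $\nabla_{\dot q}(\pi_\Ac e)$ via Lemma~\ref{lem:tens-field}, substitute the equations of motion, and collapse the remaining projections onto $e_q^\Ac$ before converting the force term via $\sharp$. The one place you are more explicit than the paper is the use of self-adjointness and idempotence of $\pi_\Ac$ to justify $\braket{e_q^\Ac, \pi_\Ac X} = \braket{e_q^\Ac, X}$; the paper applies this silently in dropping the $(\cdot)^\Ac$ superscript on $\nabla_{\dot q} e$.
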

\begin{proof}
    Let $\dot q: I(v_q) \to TQ$ be the unique maximal solution to \eqref{eqn:eom} under a force controller $F$ with $F(v_q) = F_q$, satisfying $\dot q(0) = v_q$. Suppressing $(t)$ for convenience, we successively use \cite[Prop. 5.5]{lee2018introduction} and Lemma \ref{lem:tens-field} to compute,
    \begin{align}
        \tfrac{d}{dt}\tfrac{1}{2}\norm{e^\Ac}^2 &= \braket{e^\Ac, \nabla_{\dot q} (e^\Ac)}\\
        &= \braket{e^\Ac, (\nabla_{\dot q} e)^\Ac + (\nabla_{\dot q} \pi_A)(e)}. \label{eq:error-dot}
    \end{align}
    Applying linearity ($\nabla_{\dot q} e = \nabla_{\dot q} \dot q - \nabla_{\dot q}\kappa_q$) and \eqref{eqn:eom} to \eqref{eq:error-dot},
    \begin{align}
        &=\braket{e^\Ac, (\nabla_{\dot q} \pi_A)e - \nabla_{\dot q} \kappa_q}
        + \braket{e^\Ac, \nabla_{\dot q} \dot q}\\
        &= \braket{e^\Ac, (\nabla_{\dot q} \pi_A)e - \nabla_{\dot q} \kappa_{q} - \grad V_q}
        + \braket{F(\dot q); e^\Ac}. \label{eq:error-dot-final}
    \end{align}
    Since $\dot h(v_q, F_q) = \frac{d}{dt}|_{t = 0} h(\dot q)$, \eqref{eqn:hDot} follows by substituting \eqref{eq:error-dot-final} into $\frac{d}{dt}|_{t = 0}h(\dot q) = (d(h_0)_q \dot q - \frac{\varepsilon}{2}\frac{d}{dt}\norm{e^\Ac}^2)|_{t = 0}$.
\end{proof}

\begin{theo}\label{theo:cbf-underact}
    Consider a SMCS $\Sigma = (Q, g, V, \F)$. Let $h_0$ and $\kappa \in \X(Q)$ be as in Lemma \ref{lem:config-safety}. If for all $q \in \D_0$, $\Uc Q$ satisfies $\U_qQ \subseteq \ker d(h_0)_q$, then for all $\varepsilon > 0$, \eqref{eqn:cand-cbf} is a CBF for $\Sigma$ on $T \D_0 \subseteq TQ$, with the same $\alpha \in \K_\infty^e$ as in Lemma \ref{lem:config-safety}.
\end{theo}
\begin{proof}
    We will prove that $\sup_{F_q \in \F_q}\dot h(v_q, F_q) > -\alpha(h(v_q))$ for all $v_q \in T\D_0$, where $\alpha$ is the same class $\K_\infty^e$ function that certifies the safety of $h_0$. Fix $q \in \D_0$ and $v_q \in T_q\D_0$.

    First, consider the case where $(v_q - \kappa_q)^\Ac = 0$. Fix any $F_q \in \F_q$. As $v_q$ admits a unique decomposition $v_q = v_q^\Ac + v_q^{\Uc}$, it must satisfy $v_q = \kappa_q^\Ac + v_q^\Uc$. Together with Lemma \ref{lem:cbf-lie-deriv} and the assumption of $\Uc_qQ \subseteq \ker d(h_0)_q$, this implies:
    \begin{align}
        \dot h(v_q, F_q) = d(h_0)_q v_q = d(h_0)_q \kappa_q^\Ac = d(h_0)_q \kappa_q. \label{eq:h-case-1}
    \end{align}
    By design, $\kappa$ satisfies $d(h_0)_q \kappa_q > -\alpha(h_0(q))$. Further, since $(v_q - \kappa_q)^\Ac = 0$, we must have $h_0(q) = h(v_q)$. We conclude that $\dot h(v_q, F_q) > -\alpha(h(v_q))$, completing the first case.

    Now, suppose $(v_q - \kappa_q)^{\Ac} \neq 0$. Here, there exists an $F_q \in \F_q$ for which $\braket{F_q; (v_q - \kappa_q)^\Ac} \neq 0$. Since Lemma \ref{lem:cbf-lie-deriv} establishes that $\dot h(v_q, F_q)$ is affine in $F_q$ with linear term $\braket{F_q; (v_q - \kappa_q)^\Ac}$, we must have $\sup_{F_q \in \F_q}\dot h(v_q, F_q) = +\infty$. We conclude that $\sup_{F_q \in \F_q} \dot h(v_q, F_q) > -\alpha(h(v_q))$ for all $v_q \in T\D_0$, and that $h$ is a CBF for $\Sigma$ on $T\D_0$.
\end{proof}

Theorem \ref{theo:cbf-underact} states that, provided the unactuated directions are not required for configuration safety, $h$ in \eqref{eqn:cand-cbf} is a CBF. Moreover, the derived expression for  $\dot h$ avoids computations in $T(TQ)$ and provides a tractable condition for enforcing safety. Note that in the fully actuated case, $\U_q Q \subseteq \ker d(h_0)_q$ is automatically satisfied for all $q \in Q$, and $\pi_{\Ac} = \id_{TQ}$.

The following proposition formalizes the relationship between \eqref{eqn:cand-cbf} and the satisfaction of configuration safety.

\begin{prop}
    Consider the setting of Theorem \ref{theo:cbf-underact}. Let $F$ be any locally Lipschitz control force for which $\dot h(v_q, F(v_q)) \geq -\alpha(h(v_q)), \; \forall v_q \in T\D_0$. For any safe initial velocity $\dot q(0) = \dot q_0, \; h(\dot q_0) \geq 0$, the trajectory $\dot q: I(\dot q_0) \to TQ$ of the closed-loop system satisfies $q(t) \in \C_0 \forall t \in I_{\geq 0}(\dot q_0)$.
\end{prop}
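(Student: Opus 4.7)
The plan is to apply Theorem~\ref{theo:fwd-inv}(ii) to the closed-loop vector field on $TQ$ to obtain forward invariance of the safe set $\C_h \defeq h^{-1}(\R_{\geq 0}) \subseteq TQ$, and then project this invariance down to $Q$ by using the explicit form \eqref{eqn:cand-cbf} of $h$.

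First, I would verify that Theorem~\ref{theo:fwd-inv}(ii) applies. Theorem~\ref{theo:cbf-underact} established that $h$ is a CBF for $\Sigma$ on $T\D_0$ with the given $\alpha \in \K_\infty^e$. The hypothesis on $F$ is precisely that the closed-loop barrier inequality $dh_{v_q} \bPhi(v_q, F(v_q)) \geq -\alpha(h(v_q))$ holds on $T\D_0$ (recall that $\dot h(v_q, F_q)$ is just notation for $dh_{v_q}\bPhi(v_q,F_q)$). To invoke Theorem~\ref{theo:fwd-inv}(ii) one needs $T\D_0$ to be an open neighborhood of $\C_h$; this is immediate from \eqref{eqn:cand-cbf}, since any $v_q \in \C_h$ satisfies
\begin{align*}
    0 \leq h(v_q) = h_0(q) - \tfrac{\varepsilon}{2}\norm{(v_q - \kappa_q)^\Ac}^2 \leq h_0(q),
\end{align*}
so $q \in \C_0 \subseteq \D_0$ and hence $v_q \in T\D_0$. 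The closed-loop dynamics $v_q \mapsto \bPhi(v_q, F(v_q))$ are locally Lipschitz on $TQ$ because $F$ is locally Lipschitz and the Levi-Civita connection is smooth, so the induced drift on $TQ$ is smooth in $v_q$.

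Next, Theorem~\ref{theo:fwd-inv}(ii) yields forward invariance of $\C_h$ for the closed-loop vector field on $TQ$: for any initial velocity $\dot q_0$ with $h(\dot q_0) \geq 0$, the unique maximal solution $\dot q : I(\dot q_0) \to TQ$ satisfies $\dot q(t) \in \C_h$ for all $t \in I_{\geq 0}(\dot q_0)$.

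Finally, I would project down to $Q$. For every $t \in I_{\geq 0}(\dot q_0)$, the inequality $h(\dot q(t)) \geq 0$ together with \eqref{eqn:cand-cbf} gives
\begin{align*}
    h_0(q(t)) \geq \tfrac{\varepsilon}{2}\norm{(\dot q(t) - \kappa_{q(t)})^\Ac}^2 \geq 0,
\end{align*}
so $q(t) \in \C_0$, which is the desired conclusion. There is no serious obstacle: all the heavy lifting was done in Theorem~\ref{theo:cbf-underact} and Theorem~\ref{theo:fwd-inv}. The only subtleties are the containment $T\D_0 \supseteq \C_h$ and the observation that $h$ dominates $h_0 \circ \pi_{TQ}$ from below by construction, both of which fall out of the definition of $h$.
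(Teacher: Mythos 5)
Your proposal is correct and follows essentially the same route as the paper: invoke Theorem~\ref{theo:fwd-inv}(ii) for the closed-loop vector field on $TQ$ to conclude $h(\dot q(t))\geq 0$, then project down using $h_0(q)\geq h(v_q)$, which follows from the form of \eqref{eqn:cand-cbf}. The only difference is that you spell out the preconditions of Theorem~\ref{theo:fwd-inv}(ii) (the containment $\C_h \subseteq T\D_0$ and local Lipschitz continuity of the closed-loop field), which the paper leaves implicit.
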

\begin{proof}
    By Theorem \ref{theo:fwd-inv}, $h(\dot q_0) \geq 0$ implies $h(\dot q(t)) \geq 0$ for all $t \in I_{\geq 0}(\dot q_0)$. Since $h_0(q) \geq h(v_q)$ for any $v_q \in TQ$, we conclude that $h_0(q(t)) \geq h(\dot q(t)) \geq 0 \; \forall t \in I_{\geq 0}(\dot q_0)$.
\end{proof}

Since $h$ is a CBF and $\dot h(v_q,F(v_q))\geq -\alpha(h(v_q))$ enforces safety, any of the geometric CBF control design methods introduced in Section III can be used to construct a safe control force~$F$. One particular choice is the (appropriate closed-form) CBF-QP controller that enforces the $\dot h$ constraint with the dual norm on $T^*Q$ used for the norm on $\F$ in the cost.
\section{Application to the Satellite}
As an application of the above, we design a backstepping CBF for an underactuated satellite, which we  model as a rotating rigid body on $SO(3)$. We assume that the rigid body has inertia tensor $J = \diag(J_1, J_2, J_3)$ with $J_1 = J_2$, and is actuated about the $e_1$ and $e_2$ body-fixed axes. 

To remain safe, the satellite must keep its heat shield, which is orthogonal to the body-fixed $e_3$-axis, within a safe angle $\theta_{\textup{safe}}$ of the spatial $e_3$-axis. We encode this in the configuration safety specification $h_0(R) = e_3^\top R e_3 - \cos(\theta_{\text{safe}})$; the projection of the safe configuration set to $\sph^2$ is visualized in Figure \ref{fig:sphere-cap}. For the safe velocity vector field, we take the controller of Lemma \ref{lem:config-safety}; note that since $\C_0$ is compact, we can alternatively use the gradient of $h_0$ as the safe velocity vector field. As the underactuation condition of Theorem \ref{theo:cbf-underact} is satisfied, we take \eqref{eqn:cand-cbf} as our CBF, with the norm induced by the kinetic energy metric on $SO(3)$.

To ensure safety, we design a CBF-QP controller that filters the inputs from an unsafe nominal geometric PD controller. The nominal controller, obtained via symmetry reduction of the satellite to a fully-actuated system on $\sph^2$, tracks an unsafe reference trajectory. Figure~\ref{fig:sphere-cap} demonstrates that, while the nominal trajectory violates the configuration constraint, the CBF-QP keeps the satellite safe for all time.  

\begin{figure*}[ht]
    \centering
    \includegraphics[width=0.36\linewidth]{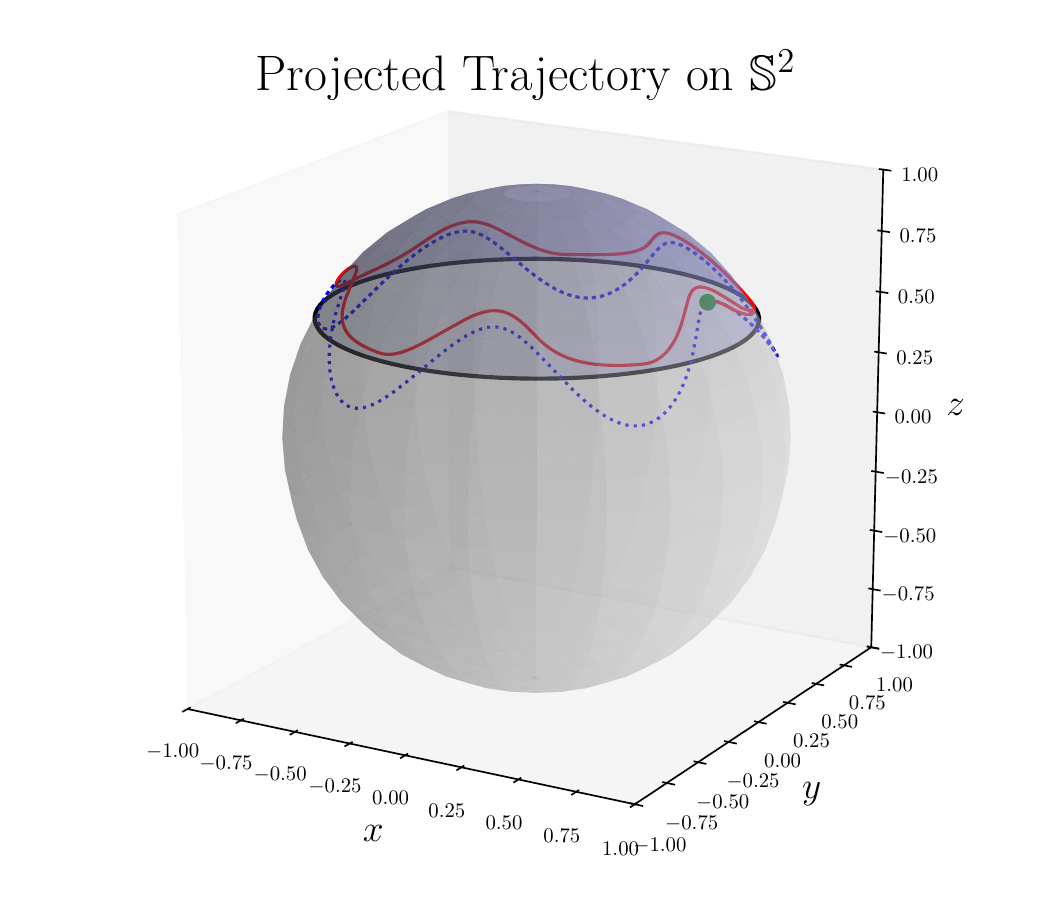}
    \includegraphics[width=0.3\linewidth]{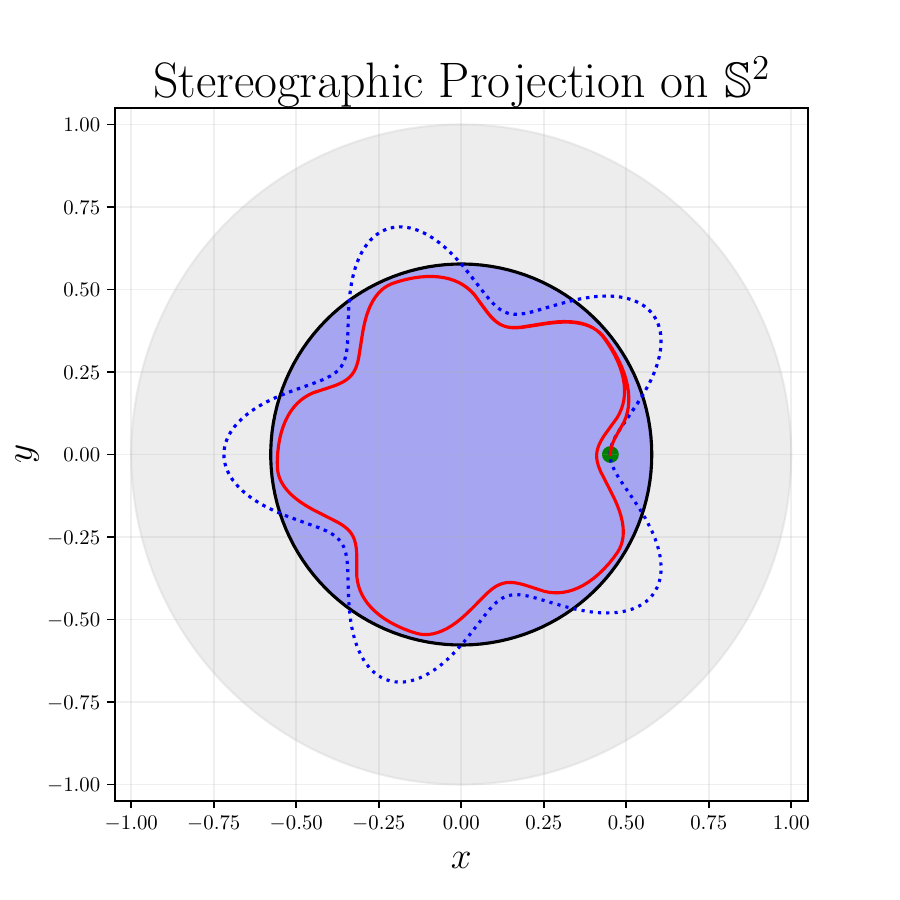}
    \includegraphics[width=0.29\linewidth]{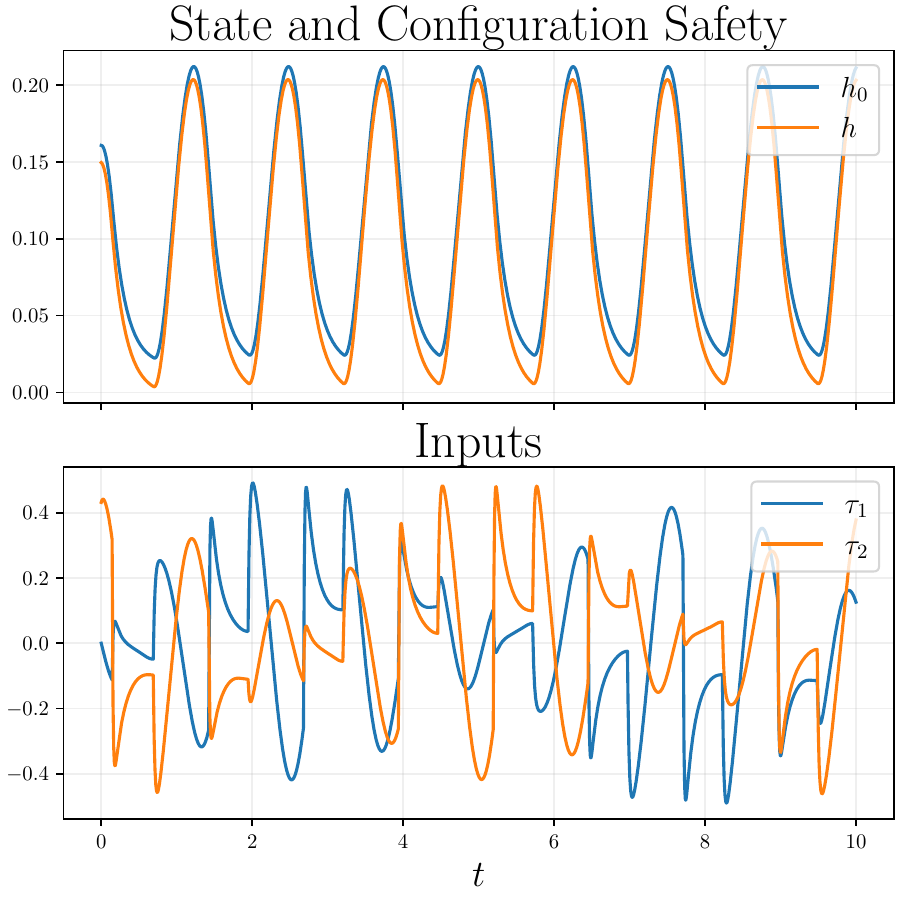}\\
    \includegraphics[width=0.9\linewidth]{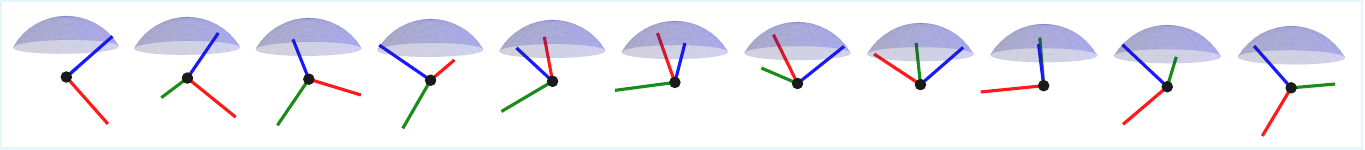}
    \caption{The safe set on $SO(3)$ is visualized by projecting to the sphere via $R \in SO(3) \mapsto Re_3 \in \sph^2$. A geometric tracking controller \cite{bullo2019geometric} for an unsafe trajectory (plotted as a dotted blue line), is filtered by a CBF-QP controller. The resulting trajectory, plotted in red, maintains a safe configuration for all time. The evolution of $R \in SO(3)$ is visualized as a sequence of frames, with the red, green, and blue axes representing $Re_1, Re_2, Re_3$, respectively.}
    \label{fig:sphere-cap}
\end{figure*}
\section{Conclusion}
In this work, we established a set of core theoretical results for CBFs on smooth manifolds. We began by formulating CBFs for systems on bundles, and then developed optimal safety filters and their smooth analogues for control-affine systems on vector bundles. In doing this, we demonstrated that the familiar Euclidean CBF results can be translated to the abstract geometric setting. Following this, we detailed a backstepping-based CBF synthesis method for mechanical systems, generalizing existing mechanical CBF techniques to a global, geometric setting. We then applied this technique to design a safety-critical controller for an underactuated satellite, demonstrating the efficacy of the backstepping construction in enforcing configuration-level safety.

Future work includes extending CBF synthesis to broader underactuation conditions, studying the robustness of the geometric techniques under uncertainty, and exploring the role of Lie group symmetries in facilitating CBF synthesis.

\bibstyle{ieeetr}
\bibliography{references}

\end{document}